\newcommand{\mn}{\sl}
\def\afterthmseparator{}
\renewcommand{\@begintheorem}[2]{\trivlist
      \item[\hskip \labelsep{\bf #1\ #2\unskip\afterthmseparator}]\mn}
\renewcommand{\@opargbegintheorem}[3]{\trivlist
      \item[\hskip \labelsep{\bf #1\ #2\ (#3)\unskip\afterthmseparator}]\mn}
\newtheorem{theorem}{Theorem}[section]
\newtheorem{lemma}{Lemma}[section]
\newtheorem{corollary}{Corollary}[section]
\newtheorem{proposition}{Proposition}[section]
\newtheorem{rem}{Remark}[section]
\newenvironment{remark}{\renewcommand{\mn}{\rm} \begin{rem}}{\end{rem}}
\newcommand{\qed}{$\;\;\;\Box$}
\newenvironment{proof}{\par\smallbreak{\sl Proof.~}}
{\unskip\nobreak\hfill \qed \par\medbreak}
\newcommand{\function}[2]{:#1 \rightarrow #2}
\newcommand{\absvalue}[1]{\left| #1 \right|}
\newcommand{\setdef}[2]{\left\{ \hspace{0.5mm} #1 :
\hspace{0.5mm} #2 \right\}}
\newcommand{\bbn}{ {\bf N} }
\newcommand{\PP}[1]{ {\bf P} \left[ #1 \right] }
\newcommand{\condP}[2]{ {\bf P} \left[ \hspace{0.5mm} #1
\left| \hspace{0.5mm} #2 \right. \right] }
\newcommand{\refeq}[1]{(\ref{#1})}
\newcommand{\dcm}{{\sc DCM}}
\newcommand{\dcnm}{{\sc DCNM}}
\newcommand{\memb}{{\sc Membership in a Permutation Group}}
\newcommand{\problem}[3]{\par\smallskip\par\noindent #1\\
{\it Given:} #2.\\ {\it Recognize if:} #3. \par\smallskip\par}
\newcommand{\view}{\mbox{view}}
\newcommand{\completeness}{\par{\it Completeness.\ }}
\newcommand{\soundness}{\par{\it Soundness.\ }}
\newcommand{\zk}{\par{\it Zero-knowledge.\ }}
\newsavebox{\biblio}
\savebox{\biblio}{\parbox{\textwidth}{\normalsize This paper was published in
{\it Algebraic structures and their applications.}
Proceedings of the third international algebraic conference held in
framework of the Ukrainian mathematical congress (Kiev, 2001),
pages 351--363. Institute of Mathematics, Ukrainian Academy of Sciences (2002).

\mbox{}\qquad
Reviewed in {\it Mathematical Review} MR2210506 (2006m:68054) and
{\it Zentralblatt f\"ur Mathematik} Zbl 1099.20501.}}
\title{
\mbox{}\\[-45mm]\usebox{\biblio}\\[5mm]
On the Double Coset Membership Problem\\
for Permutation Groups}
\author{Oleg Verbitsky\\
{\small Department of Algebra}\\
{\small Faculty of Mechanics \& Mathematics}\\
{\small Kyiv National University}\\
{\small Volodymyrska 60}\\
{\small 01033 Kyiv, Ukraine}
}
\date{}
\begin{document}

\maketitle

\begin{abstract}
\noindent
We show that the Double Coset Membership problem for permutation groups
possesses perfect zero-knowledge proofs.
\end{abstract}

\section{Introduction}

\subsection{Definition of the problem}

Let $S_m$ be a symmetric group of order $m$.
We suppose that an element of $S_m$, a permutation of an $m$-element
set, is encoded by a binary string of length $n=\lceil\log_2m!\rceil$,
$m(\log_2m-O(1))\le n\le m\log_2m$. Whenever
we refer to a {\em permutation group\/} $G$, we mean that
$G$ is a subgroup of $S_m$ for some $m$. Throughout the paper
we assume that permutation groups are given by a list of their
generators.

In this paper we address the following algorithmic problem
considered first by Luks~\cite{Luk}.
\problem{\dcm\/ ({\sc Double Coset Membership})}
{two permutations $\sigma$ and $\tau$ and two permutation groups $G$ and $H$,
all of the same order}
{$\sigma\in G\tau H$}

\subsection{Current complexity status}

For the background on computational complexity theory the reader
is referred to~\cite{GJo}.

\dcm\/ is in the class NP by the Babai-Szemer\'edy Reachability
Theorem~\cite{BSz}. This theorem says that, given any set $S$
of generators of a finite group $G$ and any $g\in G$, there exists
a sequence of elements $u_1,\ldots,u_l$ of $G$ such that the
following conditions are met.
\begin{enumerate}
\item
Each $u_i$ either belongs to $S$ or is obtained by the inversion
or the group operation from one or two previous elements of the sequence.
\item
$u_l=g$.
\item
$l\le (1+\log_2|G|)^2$.
\end{enumerate}

As $\sigma\in G\tau H$ iff $\tau^{-1}\sigma\in (\tau^{-1}G\tau)H$,
\dcm\/ admits the following reformulation.
\problem{\dcm\/ (An equivalent formulation)}
{a permutation $s$ and two permutation groups $G$ and $H$,
all of the same order}
{$s\in GH$}

Consider two related problems, the first one easier and the second one
harder than \dcm.

\problem{\memb}
{a permutation $s$ and a permutation group $G$ of the same order}
{$s\in G$}

\problem{{\sc Membership in a 3-fold Group Product}}
{a permutation $s$ and three permutation groups $G$, $H$, and $K$,
all of the same order}
{$s\in GHK$}

It is known that the former problem is solvable in polynomial time
\cite{Sim,FHL} and that the latter problem is NP-complete~\cite{Luk2}.
There are evidences that the complexity of \dcm\/ is strictly
in between. On the one hand, the problem of recognition if
two given graphs are isomorphic is polynomial-time reducible
to \dcm\/ \cite{Luk}, see also Proposition \ref{prop:gi} below.
\dcm\/ is therefore not expected to be solvable in polynomial time
as long as the Graph Isomorphism problem is not solved in polynomial
time (the currently best algorithm due to Luks and Zemplyachenko
runs in time $\exp(O(\sqrt{n\log n}))$
for graphs on $n$ vertices, see \cite{BLu}).
On the other hand, \dcm\/ belongs to the complexity class coAM
(see Subsection \ref{ss:zkdefn} for the definition).
By \cite{BHZ}, if NP is a subclass of coAM, then the polynomial-time
hierarchy of complexity classes collapses to its second level, i.e.,
$\Sigma_2^P=\Pi_2^P$ (see~\cite{GJo}). As the latter consequence
is widely considered unlikely, it is unlikely that \dcm\/ is NP-complete.

Like the membership in coAM, some other complexity-theoretic results
known for Graph Isomorphism also generalize to \dcm. Both the problems
have {\em program checkers\/} \cite{BKa}, and both are {\em low\/}
for the complexity class PP \cite{KST2}.

It is worth noting that several other group-theoretic problems are
polynomial-time equivalent with \dcm. We mention a few examples
from the list of such problems compiled in \cite{Luk,Hof2}:
Given permutation groups $G$, $H$ and permutations $\sigma$, $\tau$,
(a) find generators for $G\cap H$;
(b) recognize if $G\sigma$ and $H\tau$ intersect;
(c) if $\sigma\in G$, find the centralizer of $\sigma$ in $G$;
(d) if $\sigma,\tau\in G$, recognize if the centralizer of $\tau$
in $S_m$ intersects $G\sigma$.
In \cite{BKa} it is shown that \dcm\/ is equivalent with the problem,
given $s\in GH$, to find a factorization $s=gh$ with $g\in G$ and $h\in H$.

\subsection{Our result}

A natural question to ask about an NP problem whose polynomial-time
solvability and NP-completeness are unknown is if it possesses
a perfect or a statistical zero-knowledge interactive proof system.
Informally speaking, a zero-knowledge proof system for a recognition
problem of a language $L$ is a protocol
for two parties, the prover and the verifier, that allows the prover to
convince the verifier that a given input belongs to $L$,
with high confidence but without communicating the verifier any information
(the rigorous definitions are in Subsection~\ref{ss:zkdefn}).

The concept of a zero-knowledge proof has notable applications
in designing cryptographic protocols and in estimating the computational
complexity of a language recognition problem. Namely, by \cite{AHa}
the class PZK of languages having perfect zero-knowledge proof systems
is a subclass of coAM. Thus, the existence of a perfect zero-knowledge
proof of the membership in $L$ not only has a cryptographic meaning
but also implies that $L$ is in coAM and hence cannot be NP-complete
unless the polynomial-time hierarchy collapses.

For the Graph Isomorphism problem, its membership in coAM was proven
directly in \cite{Scho} and its membership in PZK was proven in~\cite{GMW}.
For \dcm, the proof of its membership in coAM given in \cite{BMo}
is direct. In the present paper we prove that \dcm\/ is also in PZK.
We therefore extend the list of problems in
PZK that currently includes Graph Isomorphism \cite{GMW},
Quadratic Residuosity \cite{GMR}, a problem equivalent to
Discrete Logarithm \cite{GKu}, and approximate versions of the
Shortest Vector and Closest Vector problems for integer lattices~\cite{GGo}.

\section{Background on zero-knowledge proofs}

\subsection{Definitions}\label{ss:zkdefn}

We denote the {\em length\/} of a binary word $w$ by $|w|$.
We consider {\em languages\/} over the binary alphabet which are
subsets of $\{0,1\}^*$. The {\em complement\/} of $L$ is the language
$\bar L=\{0,1\}^*\setminus L$. Note that the \dcm\/ problem can be
represented as
a recognition problem for the language $L=\setdef{(s,G,H)}{s\in GH}$,
where $(s,G,H)$ is a suitable binary encoding of the triplet consisting
of a permutation $s$ and the lists of generators for permutation
groups $G$ and $H$.

We use the standard computational model of a deterministic {\em
Turing machine}, abbreviated further on as TM. We assume that
a TM has three tapes, namely, the input tape, the output tape,
and the work tape where all computations are performed.

A {\em probabilistic\/} TM, abbreviated further on as PTM, in addition has
the fourth tape containing a potentially infinite random binary string.
Assuming that a PTM halts on input $w$ and random string $r$,
we denote its running time by $t(w,r)$. A PTM
is {\em polynomial-time\/} if $t(w,r)$ is bounded by a polynomial
in $|w|$ for all $w$ and $r$. Assuming that a PTM halts on $w$ for
almost all $r$, the function $t(w,r)$ for a fixed $w$ can be
considered as a random
variable on the probability space $\{0,1\}^\bbn$ of all random strings.
A PTM is {\em expected polynomial-time\/} on $L\subseteq\{0,1\}^*$
if for all $w\in L$ the expectation
of $t(w,r)$ is bounded by a polynomial in $|w|$.

An {\em interactive proof system\/} $\langle V,P\rangle$, further on
abbreviated as IPS, consists of
two PTMs, a polynomial-time $V$ called {\em the verifier\/} and
a computationally unlimited $P$ called {\em the prover}.
The input tape is common for the verifier and the prover.
The verifier and the prover also share a communication tape
which allows message exchange between them.
The system works as follows. First both the machines $V$ and $P$
are given an input $w$ and each of them is given an individual
random string, $r_V$ for $V$ and $r_P$ for $P$. Then $P$ and $V$
alternatingly write messages to one another in the communication tape.
$V$ computes its $i$-th message $a_i$ to $P$ based on the input $w$,
the random string $r_V$, and all previous messages from $P$ to $V$.
$P$ computes its $i$-th message $b_i$ to $V$ based on the input $w$,
the random string $r_P$, and all previous messages from $V$ to $P$.
After a number of message exchanges $V$ terminates interaction and
computes an output based on $w$, $r_V$, and all $b_i$. The output
is denoted by $\langle V,P\rangle(w)$.
Note that, for a fixed $w$, $\langle V,P\rangle(w)$ is a random variable
depending on both random strings $r_V$ and $r_P$.

Let $\epsilon(n)$ be a function of a natural argument taking on
positive real values.
We call $\epsilon(n)$ {\em negligible\/} if $\epsilon(n)<n^{-c}$
for every $c$ and all $n$ starting from some $n_0(c)$.
For example, an {\em exponentially small\/} function $\epsilon(n)=d^{-n}$,
where $d>1$, is negligible.

We say that $\langle V,P\rangle$ is an {\em IPS
for a language $L$ with error $\epsilon(n)$\/} if the following two
conditions are fulfilled.

\smallskip

\noindent
{\em Completeness.} If $w\in L$, then $\langle V,P\rangle(w)=1$
with probability at least $1-\epsilon(|w|)$.\\
{\em Soundness.} If $w\notin L$, then, for an arbitrary interacting
PTM $P^*$, $\langle V,P^*\rangle(w)=1$ with probability at most
$\epsilon(|w|)$.

\smallskip

\noindent
We will call any prover $P^*$ interacting with $P$ on input
$w\notin L$ {\em cheating}.
If in the completeness condition we have $\langle V,P\rangle(w)=1$
with probability 1, we say that $\langle V,P\rangle$ has {\em one-sided
error\/} $\epsilon(n)$.

We say that $\langle V,P\rangle$ is an {\em IPS
for a language $L$\/} if $\langle V,P\rangle$ is an IPS
for $L$ with negligible error.

An IPS is {\em public-coin\/} if the concatenation
$a_1\ldots a_k$ of the verifier's messages is a prefix of his random
string $r_V$. A {\em round\/} is sending one message from the verifier
to the prover or from the prover to the verifier.
The class AM consists of those languages having
IPSs with error $1/3$ and with number of rounds bounded by a constant
for all inputs. A language $L$ belongs to the class coAM iff its
complement $\bar L$ belongs to AM.

Given an IPS $\langle V,P\rangle$ and an input $w$,
let $\view_{V,P}(w)=(r'_V,a_1,b_1,\ldots,a_k,b_k)$ where $r'_V$ is a part
of $r_V$ scanned by $V$ during work on $w$ and $a_1,b_1,\ldots,a_k,b_k$ are
all messages from $P$ to $V$ and from $V$ to $P$
($a_1$ may be empty if the first message
is sent by $P$). Note that the verifier's messages $a_1,\ldots,a_k$
could be excluded because they are efficiently computable from the other
components. For a fixed $w$, $\view_{V,P}(w)$ is a random variable depending
on $r_V$ and $r_P$.

An IPS $\langle V,P\rangle$ is {\em perfect
zero-knowledge on $L$\/} if for every interacting polynomial-time PTM $V^*$
there is a PTM $M_{V^*}$, called a {\em simulator},
that on every input $w\in L$ runs in expected polynomial time
and produces output $M_{V^*}(w)$ which,
if considered as a random variable depending on a random string
of $M_{V^*}$, is distributed identically with $\view_{V^*,P}(w)$.
The latter condition means that
$$
\PP{M_{V^*}(w)=z}=\PP{\view_{V^*,P}(w)=z}\mbox{ for all }z.
$$
If only a weaker condition that
$$
\sum_z\absvalue{\PP{M_{V^*}(w)=z}-\PP{\view_{V^*,P}(w)=z}}
\mbox{ is negligible}
$$
is true, we call $\langle V,P\rangle$ {\em statistical zero-knowledge}.
These notions formalize the claim that the verifier gets no information
during interaction with the prover: Everything that the verifier gets
he can get without the prover by running the simulator.

According to the definition the verifier learns nothing even if he
deviates from the original program and follows an arbitrary probabilistic
polynomial-time program $V^*$. We will call the verifier $V$ {\em honest\/}
and all other verifiers $V^*$ {\em cheating}. If the existence of
a simulator is claimed
only for the honest verifier, we call such a proof system
{\em honest-verifier perfect ({\em or\/} statistical) zero-knowledge}.

The class of languages $L$ having IPSs that are perfect
(resp.\ statistical) zero-knowledge on $L$
is denoted by PZK (resp.\ SZK). Recall that the error
here is supposed negligible.

The {\em $k(n)$-fold sequential composition\/} of an
IPS $\langle V,P\rangle$ is the IPS
$\langle V',P'\rangle$ in which $V'$ and $P'$ on input $w$ execute
the programs of $V$ and $P$ sequentially $k(|w|)$ times, each time
with independent choice of random strings $r_V$ and $r_P$.
At the end of interaction $V'$ outputs 1 iff $\langle V,P\rangle(w)=1$
in all $k(|w|)$ executions. The initial system $\langle V,P\rangle$
is called {\em atomic}.

In the {\em $k(n)$-fold parallel composition\/} $\langle V'',P''\rangle$
of $\langle V,P\rangle$, the program of $\langle V,P\rangle$
is executed $k(|w|)$ times in parallel, that is, in each round
all $k(|w|)$ versions of a message are sent from one machine to another
at once as a long single message. In every parallel execution
$V''$ and $P''$ use independent copies of $r_V$ and $r_P$.
At the end of interaction $V'$ outputs 1 iff $\langle V,P\rangle(w)=1$
in all $k(|w|)$ executions.

\subsection{Known results on zero-knowledge proofs}

We first notice a simple property of sequential composition of IPSs.

\begin{proposition}\label{prop:seqrep}
If $\langle V,P\rangle$ is an IPS for a language $L$ with one-sided
constant error $\epsilon$, then the $k(n)$-fold sequential composition
of $\langle V,P\rangle$ is an IPS for $L$ with one-sided error
$\epsilon^{k(n)}$.
\end{proposition}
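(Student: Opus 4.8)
The plan is to analyze the two conditions of the definition of an IPS with one-sided error separately, tracking how each behaves under the $k(n)$-fold sequential composition. Recall that in the composed system $\langle V', P'\rangle$, the verifier $V'$ outputs $1$ if and only if $\langle V, P\rangle(w) = 1$ in every one of the $k(|w|)$ independent executions. The crucial structural fact I would exploit is that the random strings $r_V$ and $r_P$ are chosen independently afresh in each of the $k(|w|)$ runs, so the individual execution outcomes are mutually independent random variables.

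First I would verify completeness, which is where the one-sided hypothesis does the real work. If $w \in L$, then because $\langle V, P\rangle$ has \emph{one-sided} error, each single execution of the honest prover $P$ yields $\langle V, P\rangle(w) = 1$ with probability exactly $1$. Hence all $k(|w|)$ independent executions simultaneously produce output $1$ with probability $1$, and so $\langle V', P'\rangle(w) = 1$ with probability $1$. This confirms that the composed system again has one-sided error — the output-$1$ probability on inputs in $L$ is preserved as $1$, rather than degrading, precisely because there is no error on the completeness side to accumulate.

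Next I would establish soundness, which is where the error probability actually shrinks. Fix $w \notin L$ and let $P^*$ be an arbitrary cheating prover interacting with $V'$. The subtle point, and the one deserving the most care, is that a cheating prover in the composite need not behave as $k(|w|)$ independent copies of a cheating prover in the atomic system: an adaptive $P^*$ may choose its messages in a later execution based on the transcripts of earlier executions. I would handle this by conditioning sequentially. Since $V'$ uses a fresh independent random string $r_V$ in each execution, the verifier's acceptance in the $i$-th run is, conditioned on all prior runs and on whatever strategy $P^*$ adopts for run $i$, governed by the soundness bound of the atomic system: given any fixed history, the conditional probability that $V$ accepts in that run is at most $\epsilon$, because relative to the fresh $r_V$ the induced single-run strategy of $P^*$ is just some cheating prover for the atomic system. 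Multiplying these conditional bounds over the $k(|w|)$ runs — formally, by iterating the inequality $\PP{V \text{ accepts all runs}} \le \epsilon \cdot \PP{V \text{ accepts first } k-1 \text{ runs}}$ — yields acceptance probability at most $\epsilon^{k(|w|)}$.

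The main obstacle, then, is not the completeness side but making the soundness argument rigorous in the face of an adaptive cheating prover; the independence of the verifier's coins across executions is what rescues the naive ``multiply the error probabilities'' heuristic, and I would make sure the conditioning step is stated so that the atomic soundness bound genuinely applies run-by-run regardless of the adversary's correlations across runs. Once both parts are in place, combining them shows $\langle V', P'\rangle$ is an IPS for $L$ with one-sided error $\epsilon^{k(n)}$, as claimed.
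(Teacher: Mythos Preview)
Your argument is correct. The paper itself states this proposition without proof, introducing it merely as ``a simple property of sequential composition of IPSs,'' so there is no proof to compare against. Your write-up supplies the details the paper omits, and in particular your careful treatment of the adaptive cheating prover via sequential conditioning on fresh verifier randomness is exactly the right way to make the soundness bound rigorous.
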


Parallel composition obviously preserves the number of rounds,
the public-coin property, and the property of error to be one-sided.
It is not hard to prove that $k$-fold parallel composition reduces
the one-sided error $\epsilon$ to $\epsilon^k$.
It is also not hard to prove that parallel composition preserves
perfect and statistical zero-knowledge for the honest verifier.
These observations are summarized in the next proposition.

\begin{proposition}\label{prop:parrep}
Assume that $\langle V,P\rangle$ is a honest-verifier perfect zero-knowledge
public-coin IPS for a language $L$ that on all inputs works in
a constant $c$ rounds with one-sided constant error $\epsilon$.
Then $k(n)$-fold parallel composition of $\langle V,P\rangle$
is a honest-verifier perfect zero-knowledge IPS for $L$
that works in $c$ rounds with error $\epsilon^{k(n)}$.
\end{proposition}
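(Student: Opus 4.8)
The plan is to check the three asserted properties separately, treating the round count and completeness as immediate and putting the real work into the soundness bound. Round preservation is built into the definition of parallel composition: the $k(n)$ copies run in lockstep and the $k(n)$ versions of each message are concatenated into one, so the number of rounds stays exactly $c$, and the public-coin property is inherited since $V''$ still merely forwards blocks of its random string. For completeness, if $w\in L$ then by one-sidedness each atomic copy outputs $1$ with probability $1$; as $V''$ accepts iff all $k(n)$ copies accept, it too accepts with probability $1$, so the composed system again has one-sided error.

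The substantive step is soundness: for $w\notin L$ and any cheating $P^*$, the probability that $\langle V'',P^*\rangle(w)=1$ must be at most $\epsilon^{k(n)}$. The natural obstacle is that $P^*$ need not treat the copies independently, since its message in one copy at a given round may depend on the verifier's random challenges seen so far in \emph{all} copies. I would neutralize this with a value-function argument that exploits the public-coin structure. For a single copy let $\mbox{val}(s)$, defined at every partial transcript $s$, be the supremum over prover strategies of the acceptance probability continuing from $s$; then $\mbox{val}$ obeys the recursion ``expectation over the next random challenge'' at a verifier move, ``maximum over the next prover message'' at a prover move, and ``acceptance indicator'' at a leaf, and soundness says the value at the root is at most $\epsilon$. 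For the parallel game I would define $\mbox{Val}(s_1,\ldots,s_{k(n)})$ analogously, with leaf value equal to the product of the individual acceptance indicators, and prove by induction on the number of remaining rounds that $\mbox{Val}(s_1,\ldots,s_{k(n)})=\prod_i\mbox{val}(s_i)$.

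The induction is exactly where the public-coin hypothesis does its work. At a leaf the identity is the definition of the logical AND. At a verifier move the challenges in distinct copies are independent, so the expectation of the product factors into the product of expectations and the inductive hypothesis closes the step. At a prover move the inductive hypothesis rewrites the continuation value as $\prod_i\mbox{val}(s_im_i)$, a product of nonnegative terms in which the message $m_i$ influences only the $i$-th factor; hence the joint maximum over $(m_1,\ldots,m_{k(n)})$ equals $\prod_i\max_{m_i}\mbox{val}(s_im_i)$, so cross-copy correlation gains the prover nothing. Evaluating at the root, the acceptance probability equals the $k(n)$-th power of the single-copy value and is therefore at most $\epsilon^{k(n)}$.

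For honest-verifier perfect zero-knowledge I would build the simulator $M_{V''}$ for the composed honest verifier by running the atomic simulator $M_V$ independently $k(n)$ times and reorganizing its outputs into the round-by-round layout of a parallel transcript, collecting the $j$-th message of every run into the single $j$-th composed message. Because $V''$ and $P''$ use independent randomness across copies, and $V''$, being honest and public-coin, emits independent challenge blocks, the view $\view_{V'',P''}(w)$ is precisely the product of $k(n)$ independent copies of $\view_{V,P}(w)$; the reorganization is a fixed bijection, so the simulated distribution matches it identically. The expected running time stays polynomial because an expected-polynomial-time simulator is invoked a polynomially bounded number of times. I would stress that the argument covers only the honest verifier: a cheating $V^*$ could correlate its challenges across copies and destroy the product structure, which is precisely why the statement is restricted to honest-verifier zero-knowledge.
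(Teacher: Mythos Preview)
The paper does not actually prove this proposition: the preceding paragraph simply declares the round-count and public-coin preservation ``obvious'' and the error reduction and honest-verifier zero-knowledge preservation ``not hard to prove,'' and then summarizes these assertions in the proposition. Your argument is correct and supplies precisely the standard details the paper omits. The value-function factorization you give for soundness is the usual proof that one-sided error multiplies under parallel repetition in public-coin systems (independence of verifier coins lets the expectation factor; nonnegativity and coordinate-wise dependence let the prover's maximum factor), and your product-of-simulators construction is the expected proof of honest-verifier perfect zero-knowledge preservation. Your closing remark that the argument breaks for a cheating $V^*$ who correlates challenges across copies is also to the point and consonant with the paper's later Remark~4.1 about the parallel composition failing to be black-box zero-knowledge.
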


We also refer to the following deep results in the theory of
zero-knowledge proofs.

\begin{proposition}[Aiello-H\aa stad \cite{AHa}]\label{prop:aha}
$\mbox{SZK}\subseteq\mbox{coAM}$.
\end{proposition}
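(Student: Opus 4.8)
The plan is to prove the equivalent statement that for every $L\in\mbox{SZK}$ the complement $\bar L$ lies in AM; unwinding the definition of coAM this gives $L\in\mbox{coAM}$, and since $L$ is arbitrary we obtain $\mbox{SZK}\subseteq\mbox{coAM}$. So I would fix a statistical zero-knowledge interactive proof system $\langle V,P\rangle$ for $L$ together with a simulator $M$, and build a constant-round public-coin protocol in which a prover tries to convince a verifier that a given input $w$ satisfies $w\notin L$.

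First I would normalize the data. Using the known normalizations of statistical zero-knowledge proofs one may assume $\langle V,P\rangle$ is public-coin, so that a transcript has the form $t=(r_V,b_1,\ldots,b_k)$ in which the verifier's messages are simply blocks of $r_V$ and acceptance is a deterministic predicate of $t$; and it suffices to use the honest-verifier simulator $M$, whose output $M(w)$ is statistically close to $\view_{V,P}(w)$ for $w\in L$. From this I would extract, as the object the new protocol reasons about, the distribution obtained by running $M$ on $w$ and recording the verifier-randomness component conditioned on the transcript being accepting.

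The heart of the argument is an entropy (equivalently support-size) gap for this conditional distribution. When $w\in L$, completeness makes real interactions accept with overwhelming probability, so in a real run the coins $r_V$ conditioned on acceptance are essentially uniform and therefore carry almost full entropy; statistical closeness transfers this to $M(w)$, forcing the parameter to be large. When $w\notin L$, soundness bounds, for \emph{every} prover strategy, the fraction of coins $r_V$ on which some interaction is accepting. Hence the simulator can manufacture accepting transcripts only by concentrating its verifier-coins on a small ``lucky'' set---otherwise the prover messages it outputs would constitute a cheating prover that beats the soundness bound on a large fraction of coins. This pins the parameter to a small value even though $M$ carries no guarantee off $L$. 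The resulting two-sided gap is what the AM protocol certifies: the prover, asserting $w\notin L$, must convince the verifier that the conditional distribution has small support, an \emph{upper}-bound claim handled by a set upper-bound protocol, while a companion \emph{lower}-bound protocol in the style of Goldwasser and Sipser prevents a cheating prover from understating the ambient totals. Both are constant-round and public-coin, so the composite protocol stays in AM; parallel repetition then drives the error below any negligible threshold while preserving the round count.

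The step I expect to be the main obstacle is making the gap quantitative and robust. The simulator is promised to behave well only on inputs in $L$ and may do anything on inputs outside $L$, so the small-support conclusion for $w\notin L$ has to be wrung purely out of soundness, carefully accounting for the statistical slack of the simulator, the completeness and soundness errors, and the necessarily approximate guarantees of the set-size protocols. Choosing the exact parameter---conditional min-entropy, Shannon entropy, or a raw count of accepting coins---so that all these inequalities compose into a clean separation, and arranging amplification so that the slacks do not swamp the gap, is the delicate bookkeeping on which the whole reduction rests.
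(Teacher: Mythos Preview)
The paper does not prove this proposition. Proposition~\ref{prop:aha} appears in Subsection~\ref{ss:zkdefn}'s sequel on known results with a bare citation to Aiello--H\aa stad~\cite{AHa} and no argument whatsoever; it is then used only as a black box in the proof of Corollary~\ref{cor:4}. There is therefore no ``paper's own proof'' to compare your proposal against.

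For what it is worth, the shape of your sketch---extract from the honest-verifier simulator a distribution on the verifier's coins, establish an entropy/support gap between $w\in L$ and $w\notin L$ by playing completeness against soundness, and have an AM protocol certify which side of the gap one is on---is indeed the Aiello--H\aa stad/Fortnow idea. Two remarks. First, your normalization to public coins invokes Okamoto's transformation (Proposition~\ref{prop:oka} here), which postdates~\cite{AHa}; the original argument handles private-coin verifiers directly, so if you want a self-contained proof you should not lean on that. Second, and more substantively, you say Merlin certifies that the conditional distribution has \emph{small} support via a ``set upper-bound protocol''. Upper-bounding a set in AM is the hard direction: the Goldwasser--Sipser hashing technique lets Merlin certify that a set is \emph{large}, not small. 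The actual proof avoids this by recasting ``small support'' as ``the particular coins produced are \emph{heavy} under the simulator's distribution'', which \emph{is} a lower-bound statement about a preimage set and hence amenable to the hashing protocol. Your sketch does not make this turn, and it is precisely here---not in the error bookkeeping you flag---that the real work lies.
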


\begin{proposition}[Okamoto \cite{Oka}]\label{prop:oka}
\mbox{}

\begin{enumerate}
\item
Every honest-verifier statistical zero-knowledge IPS for
a language $L$ can be transformed in an honest-verifier statistical
zero-knowledge public-coin IPS for~$L$.
\item
If $L$ has an honest-verifier statistical zero-knowledge public-coin IPS,
then $\bar L$ has a honest-verifier statistical zero-knowledge
constant-round IPS.
\end{enumerate}
\end{proposition}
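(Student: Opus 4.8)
The plan is to route both parts through a single combinatorial invariant of statistical zero knowledge, namely the Statistical Difference promise problem $\mathrm{SD}$: an instance is a pair of circuits $X_0,X_1$ encoding probability distributions, the ``yes'' instances are those whose statistical distance satisfies $\mathrm{dist}(X_0,X_1)\le 1/3$ and the ``no'' instances those with $\mathrm{dist}(X_0,X_1)\ge 2/3$. The strategy is to show that an honest-verifier statistical zero-knowledge IPS is, up to negligible error, nothing more than a reduction to $\mathrm{SD}$, and then to read off the public-coin, constant-round, and complement-closure properties directly from the structure of $\mathrm{SD}$.

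For part 1 I would first extract from a private-coin honest-verifier simulator $M$ two sampling circuits whose distributions are statistically close precisely when $w\in L$. The key is a round-by-round analysis of $M$: since the verifier is only guaranteed honest, the simulated transcript already carries the verifier's coins, so I can form the distribution of genuine ``verifier-coin plus next-message'' prefixes and compare it, through $M$, against the simulated one. This encodes membership in $L$ as closeness of two explicit distributions, i.e.\ as an $\mathrm{SD}$ instance. It then remains to equip $\mathrm{SD}$ itself with a public-coin honest-verifier statistical zero-knowledge protocol; here the prover certifies closeness using hashing together with the Goldwasser--Sipser set lower-bound protocol, which is inherently public-coin, and the honest-verifier simulator is obtained by composing the set lower-bound simulator with direct sampling of the circuits. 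Composing the reduction with this protocol converts the original private-coin system into a public-coin one while incurring only negligible statistical error.

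For part 2 the goal is closure under complement. Having reduced $L$ to $\mathrm{SD}$ as above, I would apply a polarization step amplifying the promise gap from $[1/3,2/3]$ to $[2^{-k},1-2^{-k}]$, so that $w\in L \Leftrightarrow \mathrm{dist}(X_0,X_1)\le 2^{-k}$ and $w\in\bar L \Leftrightarrow \mathrm{dist}(X_0,X_1)\ge 1-2^{-k}$. Membership in $\bar L$ is thus the assertion that $X_0$ and $X_1$ are statistically \emph{far}, and this admits a two-message, hence constant-round, honest-verifier statistical zero-knowledge protocol: the verifier picks $b\in\{0,1\}$ uniformly, sends a sample $y\sim X_b$, the prover returns a guess $b'$, and the verifier accepts iff $b'=b$. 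Large statistical distance lets the prover guess correctly with probability close to $1$, while small distance caps any prover's success near $1/2$; the honest-verifier simulator simply samples $b$ and $y$ and sets $b'=b$. This yields a constant-round honest-verifier statistical zero-knowledge IPS for $\bar L$.

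The \textbf{main obstacle} is the faithful simulator-to-circuit encoding underlying the reduction to $\mathrm{SD}$, together with the polarization amplifying the constant gap to an exponential one; both must be shown to be statistically tight so that the composed simulators remain negligibly close to the real views. A subtler point is avoiding circularity: the completeness of $\mathrm{SD}$ is itself typically derived from equivalences of the present kind, so a self-contained argument has to establish the encoding directly through the round-by-round simulator analysis rather than quoting completeness as a black box.
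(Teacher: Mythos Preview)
The paper does not prove this proposition at all: it is stated with attribution to Okamoto~\cite{Oka} and used as a black box, with no accompanying proof environment. There is therefore nothing in the paper's own argument to compare your proposal against.

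As an independent matter, your sketch is a recognizable outline of the Sahai--Vadhan reorganization of Okamoto's ideas, routing everything through the complete promise problem $\mathrm{SD}$. That is a legitimate and nowadays standard way to obtain both conclusions, and your identification of the two delicate points---the simulator-to-circuit encoding that yields the reduction to $\mathrm{SD}$, and the polarization lemma---is accurate. One caution: your description of the public-coin protocol for the ``close'' side of $\mathrm{SD}$ (``the prover certifies closeness using hashing together with the Goldwasser--Sipser set lower-bound protocol'') is too compressed to be convincing as stated; the actual argument passes through an entropy-difference reformulation and is not a direct application of set lower bounds to closeness. If you intend this as a self-contained proof rather than a pointer to the literature, that step would need to be spelled out. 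But again, none of this bears on comparison with the paper, which simply quotes the result.
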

Note that the item 2 of Proposition \ref{prop:oka} strengthens
Proposition \ref{prop:aha} because by \cite{GSi} every IPS can be made
public-coin at cost of decreasing the number of rounds in 2.

\begin{proposition}[Goldreich-Sahai-Vadhan \cite{GSV}]\label{prop:gsv}
Every honest-verifier statistical zero-knowledge public-coin IPS for
a language $L$ can be transformed in a general
statistical zero-knowledge public-coin IPS for $L$.
If the error of the initial IPS is one-sided, so is the error
of the resulting IPS.
\end{proposition}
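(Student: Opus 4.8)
The plan is to convert the given protocol, whose soundness and one-sided completeness already hold against \emph{all} provers, into one whose zero-knowledge property holds against \emph{all} verifiers, not just the honest one. The leverage is that the protocol is public-coin: the only thing an honest verifier ever contributes is a sequence of uniformly random strings $a_1,\dots,a_k$, and a cheating $V^*$ can differ from $V$ only by producing these strings from some skewed, adaptively chosen distribution in order to steer the honest prover's answers toward leaking information. I would therefore intercept each verifier move and replace the step ``the verifier sends a fresh uniform string $a_i$'' by a short interactive \emph{random-selection subprotocol} in which the prover and verifier \emph{jointly} produce $a_i$.

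The subprotocol $\mathrm{Sel}$ must meet two competing demands. (a) Robust near-uniformity: whenever \emph{one} of the two parties follows the protocol, the produced $a_i$ is within negligible statistical distance of uniform on $\{0,1\}^{|a_i|}$ regardless of how the other party behaves; this is what lets me inherit soundness (honest verifier, cheating prover) and completeness (both honest) from the original system. (b) Simulatability: when the prover is honest, an efficient simulator that is handed the desired target $a_i$ in advance can produce a full subprotocol transcript consistent with $a_i$ for any cheating $V^*$; this is what gives zero-knowledge. The natural building block is a pairwise-independent (2-universal) hash family: one party commits to a random hash function $h$ together with a random image $y$, and the other answers with a preimage $x$ satisfying $h(x)=y$, the point being that $h$ is fixed before the preimage is revealed, so the spreading property of the family nearly uniformizes the answer, while a simulator may run the process in reverse by fixing $x$ first and then sampling a consistent $(h,y)$. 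Constructing a single subprotocol that satisfies (a) and (b) \emph{simultaneously}, with the error driven down to negligible, is the technical heart of the argument --- this is exactly the random-selection lemma of \cite{GSV} --- and I expect it, rather than the surrounding bookkeeping, to be the main obstacle, since two-sided robustness against a cheating party and exact simulatability pull in opposite directions and force careful choices of the family, the range sizes, and possibly an amplification step.

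With $\mathrm{Sel}$ in hand, the new protocol $\langle\tilde V,\tilde P\rangle$ runs the original interaction but generates each verifier message $a_i$ through $\mathrm{Sel}$. By property (a), soundness and completeness change by at most the accumulated negligible error over the polynomially many rounds, so they are preserved. To simulate against an arbitrary $V^*$, I would splice the per-round subprotocol simulators of (b) together with the honest-verifier simulator $M$ guaranteed for $\langle V,P\rangle$: $M$ supplies a plausible honest transcript and in particular the target strings $a_i$, and each subprotocol simulator then fills in a transcript explaining $a_i$ to $V^*$. A hybrid argument over the rounds, charging the negligible error of each subprotocol simulation, shows that the combined output is statistically close to $\view_{\tilde V^*,\tilde P}(w)$ while the expected running time stays polynomial.

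Finally, to preserve one-sidedness I would exploit that one-sided error means that for every $w\in L$ the honest prover makes the honest verifier accept for \emph{every} setting of the verifier's coins. Hence, whatever string $a_i$ the selection subprotocol outputs, the honest $\tilde P$ can continue exactly as the original $P$ does on that string and still reach an accepting transcript, provided $\mathrm{Sel}$ is designed never to abort when the prover is honest. Thus $\langle\tilde V,\tilde P\rangle$ accepts every $w\in L$ with probability $1$, and the error remains one-sided, as claimed.
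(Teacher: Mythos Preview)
The paper does not supply its own proof of this proposition: it is quoted as a known result of Goldreich--Sahai--Vadhan~\cite{GSV} and used as a black box (e.g., in deriving Corollary~\ref{cor:3}). Your sketch is a faithful outline of the random-selection technique that~\cite{GSV} actually employs, so there is nothing in the present paper to compare against; your approach matches the cited source.
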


Note that, to achieve the negligible error, the transformation
of Proposition \ref{prop:gsv} makes the number of rounds increasing
with the input size increasing, even if the initial IPS is constant-round.
A transformation preserving the constant number of rounds is known
only under an unproven assumption about the hardness of the
Discrete Logarithm problem (the formal statement of the assumption
can be found in~\cite{BMOs}).

\begin{proposition}[Bellare-Micali-Ostrovsky \cite{BMOs}]\label{prop:bmo}
Suppose that a language $L$ has an honest-verifier statistical zero-knowledge
IPS that on every input $w$ works in $c(|w|)$ rounds with error at most
$1/3$. Then, under the assumption on the hardness of Discrete Logarithm,
$L$ has a general statistical zero-knowledge IPS that on input $w$ works
in $O(c(|w|))$ rounds with exponentially small error.
\end{proposition}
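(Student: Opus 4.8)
Since the proposition is quoted from \cite{BMOs}, I only sketch the proof strategy. The plan is to pass through three transformations, each costing at most a constant factor in the number of rounds. First I would turn the given honest-verifier statistical zero-knowledge IPS into a public-coin one using the Goldwasser--Sipser transformation \cite{GSi}, which alters the round complexity only by a constant factor and preserves honest-verifier statistical zero-knowledge together with the error bound $1/3$. Next I would drive the error down to an exponentially small quantity by parallel repetition: since in a public-coin protocol a verifier round is simply a uniformly random string, parallel repetition reduces the soundness error exponentially, leaves the round count unchanged, and preserves honest-verifier statistical zero-knowledge, as one simulates each copy independently from the common honest-verifier simulator.

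The substantive step is the removal of the honest-verifier restriction while retaining a constant-factor bound on rounds, and this is exactly where the Discrete Logarithm assumption enters. The idea is to force the verifier's challenges to be simulatable without allowing a cheating prover to anticipate them, by means of a commit-first coin structure built from a statistically (in fact perfectly) hiding, computationally binding commitment scheme, such as the Pedersen scheme, whose binding rests on the hardness of Discrete Logarithm. In the transformed protocol the verifier first commits to the whole string of challenges of the error-reduced public-coin protocol, then the prover sends its messages, and only afterwards does the verifier decommit, whereupon the prover completes its responses against the revealed challenges. Because the commitment is constant-round, this inflates the round complexity by only a constant factor, so the final system runs in $O(c(|w|))$ rounds.

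It remains to verify the three properties. Completeness is immediate. For soundness one observes that, since the commitment is perfectly hiding, a cheating and computationally unbounded prover that has seen only the commitment is in the same position as one facing the honest verifier's freshly drawn random challenges; hence the exponentially small soundness error of the error-reduced protocol is inherited, the binding property not even being needed here. The crux is statistical zero-knowledge against an arbitrary polynomial-time verifier $V^*$, which I would establish by the rewinding technique of Goldreich and Kahan: the simulator first runs $V^*$ on a dummy first prover message to obtain the decommitted challenges, then rewinds $V^*$ to just after the commitment phase and replays the honest-verifier simulator tuned to those challenges, the committed-first structure letting a single rewinding handle all the parallel copies at once. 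Computational binding, together with the polynomial-time bound on $V^*$, guarantees that $V^*$ cannot decommit to a different challenge after the rewinding except with negligible probability, and perfect hiding guarantees that the simulated view is statistically close to the real one. The main obstacle, and the delicate part of the argument, is the analysis showing that this simulator runs in expected polynomial time and outputs a statistically faithful distribution even when $V^*$ aborts or refuses to decommit with a probability that depends on the prover's first message; this is handled by first estimating the abort probability through sampling and then balancing the number of rewindings against it, exactly as in the Goldreich--Kahan analysis.
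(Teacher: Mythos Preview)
The paper does not prove this proposition at all; it is stated as a cited result from \cite{BMOs} and used as a black box (in Corollary~\ref{cor:5}). So there is no ``paper's own proof'' to compare against, and your opening remark that the proposition is quoted from \cite{BMOs} is exactly the right attitude.

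As for the sketch itself, the overall architecture---make the protocol public-coin, amplify in parallel, then have the verifier commit to all its coins with a perfectly hiding, computationally binding commitment derived from Discrete Logarithm and simulate by rewinding---is indeed the Bellare--Micali--Ostrovsky paradigm. One point deserves more care: you invoke the Goldwasser--Sipser transformation \cite{GSi} and assert that it ``preserves honest-verifier statistical zero-knowledge.'' That is not obvious; \cite{GSi} was designed to preserve acceptance probabilities, not simulatability, and the standard route for making HVSZK public-coin while remaining HVSZK is Okamoto's result (Proposition~\ref{prop:oka} here), which postdates \cite{BMOs}. Either justify the HVSZK preservation under \cite{GSi} or note that the original \cite{BMOs} argument handles this differently. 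A second minor point: with two-sided error $1/3$, plain parallel repetition with conjunctive acceptance does not by itself drive the error down; you need a threshold rule, which you should mention. Finally, attributing the expected-polynomial-time rewinding analysis to Goldreich--Kahan is anachronistic relative to \cite{BMOs}, though the underlying idea is the same.
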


\section{Background on permutation groups}

Given a finite set $X$, by a {\em random element\/} of $X$ we mean
a random variable uniformly distributed over $X$.

\begin{proposition}[Sims \cite{Sim,FHL}]\label{prop:sim}
\mbox{}

\begin{enumerate}
\item
There is a polynomial-time algorithm for recognizing the \memb.
\item
There is a probabilistic polynomial-time algorithm that, given
a list of generators for a permutation group $G$, outputs a random
element of $G$.
\end{enumerate}
\end{proposition}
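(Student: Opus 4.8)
The plan is to prove both items at once by constructing, in polynomial time, a \emph{point stabilizer chain} for $G$ together with \emph{transversals} for its successive quotients; this single data structure both answers membership queries and supports uniform sampling. Fix the natural action of $G\le S_m$ on $\{1,\dots,m\}$ and an ordered \emph{base} $\beta_1,\dots,\beta_k$, a sequence of points whose pointwise stabilizer in $G$ is trivial. Setting $G^{(0)}=G$ and letting $G^{(i)}$ be the stabilizer of $\beta_1,\dots,\beta_i$, one obtains the chain
$$
G=G^{(0)}\ge G^{(1)}\ge\cdots\ge G^{(k)}=\{e\}.
$$
For each $i$ I would compute, from the orbit $\Delta_i=\beta_i^{G^{(i-1)}}$, a transversal $U_i$ containing one element $u\in G^{(i-1)}$ with $\beta_i^{u}=\delta$ for each $\delta\in\Delta_i$. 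Then $U_i$ is a set of right coset representatives of $G^{(i)}$ in $G^{(i-1)}$, every $g\in G$ factors \emph{uniquely} as $g=u_k\cdots u_1$ with $u_i\in U_i$, and $|G|=\prod_i|U_i|$. Since $k\le m$ and each $|U_i|=|\Delta_i|\le m$, the whole structure has polynomial size.

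For item 1, given a query permutation $s$ I would \emph{sift} it through the chain. Compute $\beta_1^{s}$; if it lies outside $\Delta_1$, reject, since no element of $G$ sends $\beta_1$ there. Otherwise pick the $u_1\in U_1$ with $\beta_1^{u_1}=\beta_1^{s}$ and replace $s$ by $s u_1^{-1}$; because $u_1\in G$, the new $s$ lies in $G$ iff the original did, and it now fixes $\beta_1$. Recursing through all $k$ levels either stalls at some level (whence $s\notin G$) or reduces $s$ to a permutation fixing the entire base; by triviality of $G^{(k)}$ this residue is the identity exactly when $s\in G$. Each level costs polynomial time, so membership is decided in polynomial time.

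For item 2, I would exploit the uniqueness of the factorization $g=u_k\cdots u_1$. The algorithm draws, independently for each $i$, a uniformly random representative $u_i\in U_i$ and outputs the product. Since each $g\in G$ arises from exactly one tuple $(u_1,\dots,u_k)$ and the choices are independent and uniform, the output is uniform over $G$, and the running time is clearly polynomial once the transversals are in hand.

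The main obstacle is the efficient construction of the chain itself, i.e.\ proving that a base together with a \emph{strong generating set} (generators $S$ with $\langle S\cap G^{(i)}\rangle=G^{(i)}$ for every $i$) can be built in polynomial time. The orbits $\Delta_i$ and their transversals are computed by a breadth-first search on the current generators, but the real difficulty is obtaining generators for the deeper stabilizers $G^{(i)}$. Here I would invoke Schreier's lemma, which produces generators for $G^{(i)}$ from those of $G^{(i-1)}$ and the transversal $U_i$; the danger is that the number of Schreier generators multiplies from level to level and explodes. The Sims--Furst--Hopcroft--Luks analysis controls this by interleaving the construction with generator reduction and by using that $k\le m$ and $|U_i|\le m$, keeping every generating set of polynomial size and bounding the total work by a polynomial in $m$ (hence in the input length $n$). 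This polynomial-time guarantee is the technical heart of the proposition; once it is in place, items 1 and 2 follow as described above.
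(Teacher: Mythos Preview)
The paper does not give its own proof of this proposition; it is stated as a citation of Sims and Furst--Hopcroft--Luks and used as a black box. Your sketch is the standard Schreier--Sims argument underlying those references---a base and strong generating set, sifting for membership, and uniform sampling via the unique transversal factorization---and is correct in outline, so there is nothing to compare against in the paper itself.
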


The \dcm\/ problem is at least as hard as testing isomorphism of two
given graphs.
\begin{proposition}[Luks \cite{Luk}, Hoffmann \cite{Hof1}]\label{prop:gi}
The Graph Isomorphism problem is polynomial-time reducible to \dcm.
\end{proposition}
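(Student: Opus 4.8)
The plan is to transform a Graph Isomorphism instance directly into a single \dcm\ instance of the form ``$\sigma\in G\tau H$''. Let $X_1,X_2$ be graphs on the vertex set $[k]=\{1,\ldots,k\}$ with edge sets $E_1,E_2$, and suppose first that $\absvalue{E_1}=\absvalue{E_2}$ (otherwise the graphs are non-isomorphic and I would output a fixed negative instance). Put $m={k\choose 2}$ and fix a bijection between the set of unordered pairs from $[k]$ and $[m]$, under which $E_1,E_2$ become subsets of $[m]$. The natural action of $S_k$ on pairs induces a homomorphism $\rho\function{S_k}{S_m}$; write $G_0=\rho(S_k)$. A permutation $\pi\in S_k$ is an isomorphism of $X_1$ onto $X_2$ exactly when $\rho(\pi)(E_1)=E_2$, so $X_1\cong X_2$ if and only if some $g\in G_0$ carries $E_1$ onto $E_2$ setwise.

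The second step is to convert this one-sided ``set transporter'' condition into a double coset. I would fix a reference set $E^\ast=\{1,\ldots,c\}$ with $c=\absvalue{E_1}$, let $P=\mbox{Sym}(E^\ast)\times\mbox{Sym}([m]\setminus E^\ast)\le S_m$ be its setwise stabiliser, and choose $\sigma_1,\sigma_2\in S_m$ with $\sigma_i(E^\ast)=E_i$. Then $g(E_1)=E_2$ holds for some $g\in G_0$ iff $\sigma_2^{-1}g\sigma_1$ stabilises $E^\ast$, i.e. lies in $P$; that is, iff $\sigma_2^{-1}g\sigma_1=p$ for some $g\in G_0$ and $p\in P$, which rearranges to $\sigma_1=g^{-1}\sigma_2 p$ and hence to $\sigma_1\in G_0\sigma_2 P$. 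Therefore $X_1\cong X_2$ if and only if $\sigma_1\in G_0\sigma_2 P$, which is a \dcm\ instance with $\sigma=\sigma_1$, $\tau=\sigma_2$, $G=G_0$ and $H=P$.

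Finally I would check that the reduction is computable in polynomial time and that its output meets the \dcm\ format. From the generators $(1\,2)$ and $(1\,2\,\cdots\,k)$ of $S_k$ one computes their images under $\rho$ in time polynomial in $m$, obtaining two generators of $G_0$; $P$ is generated by the transpositions inside $E^\ast$ and inside its complement; and $\sigma_1,\sigma_2$ are read off directly from $E_1,E_2$. All four objects live in $S_m$ with $m={k\choose 2}$, so the ``all of the same order'' requirement is satisfied, and the encoding has size polynomial in that of $(X_1,X_2)$. The only genuine content is the algebraic identity that collapses the two transporters $\sigma_1,\sigma_2$ of $E_1,E_2$, taken relative to the common reference $E^\ast$, into membership in the single double coset $G_0\sigma_2P$; I expect this bookkeeping—choosing $E^\ast$ and its stabiliser so that a one-sided $G_0$-orbit condition becomes a two-sided coset condition—to be the main point, while the polynomial-time bounds and the edge case $\absvalue{E_1}\neq\absvalue{E_2}$ are routine.
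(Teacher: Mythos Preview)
Your reduction is correct and follows essentially the same approach as the paper: both encode graph isomorphism as a set-transporter question for the induced $S_k$-action on pairs and then turn that into a \dcm\ instance via the setwise stabiliser of an edge set. The only cosmetic differences are that the paper acts on ordered pairs (adjacency-matrix positions in $\{1,\ldots,n\}^2$) rather than unordered ones, takes $S_1$ itself as the reference set (so no auxiliary $E^\ast$ is introduced), and outputs the instance directly in the equivalent $s\in GH$ form.
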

We include a proof for the sake of completeness.
\begin{proof}
Consider two graphs of order $n$ with adjacency matrices $A=(a_{ij})$
and $B=(b_{ij})$. Let $S_1=\setdef{(i,j)}{a_{ij}=1}$ and
$S_2=\setdef{(i,j)}{b_{ij}=1}$.

Let $G$ be the group of permutations
of the square $\{1,\ldots,n\}^2$ generated by simultaneous transpositions
of $i$-th and $j$-th rows and $i$-th and $j$-th columns for all
$1\le i<j\le n$. The graphs are isomorphic iff $G$ contains a permutation
$\sigma$ such that $\sigma(S_1)=S_2$.

Let $H$ be the group of permutations $\tau$ such that $\tau(S_1)=S_1$
and $s$ be an arbitrary permutation such that $s(S_1)=S_2$.
As easily seen, a permutation $\sigma$ as above exists iff $s\in GH$.
\end{proof}
Note that the reduction described allows one to transform any zero-knowledge
proof system for \dcm\/ in a zero-knowledge proof system for Graph
Isomorphism.

\section{Zero-knowledge proofs for \dcm}

\begin{theorem}\label{thm:1}
The \dcm\/ problem has an honest-verifier perfect zero-knowledge
three-round public-coin IPS with one-sided error $1/2$.
\end{theorem}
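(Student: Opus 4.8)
The plan is to adapt the Goldwasser--Micali--Wigderson Graph Isomorphism protocol to the algebraic setting, exploiting the reformulation of \dcm\ as an equality of double cosets. First I would record the elementary fact that $s\in GH$ if and only if the double cosets $GsH$ and $GH=G\cdot\mathrm{id}\cdot H$ coincide: distinct double cosets are disjoint, so $GsH$ either equals $GH$ (when $s\in GH$) or is disjoint from it (when $s\notin GH$). Writing $x_0=\mathrm{id}$ and $x_1=s$, the prover's task becomes convincing the verifier that $Gx_0H=Gx_1H$.

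The atomic three-round protocol then runs as follows. In the first round the prover samples $g\in G$ and $h\in H$ independently and uniformly, which is possible by the sampling clause of Proposition~\ref{prop:sim}, and sends $w=gx_1h$, an element of the double coset $Gx_1H$. In the second round the verifier replies with a uniform challenge bit $d\in\{0,1\}$; this is the only verifier message and consists of random bits, so the system is public-coin. In the third round the prover must exhibit a pair $(a,b)\in G\times H$ with $w=ax_db$: for $d=1$ it returns the pair $(g,h)$ it already holds, and for $d=0$ it uses its unlimited power to find a factorization $w=ab$ with $a\in G$ and $b\in H$, which exists precisely because $w\in Gx_1H=Gx_0H=GH$ in the yes-case. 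The verifier accepts iff $a\in G$, $b\in H$, and $w=ax_db$; both membership tests run in polynomial time by the first clause of Proposition~\ref{prop:sim}. Thus the protocol has three rounds and is complete with probability $1$.

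For soundness, suppose $s\notin GH$, so that $Gx_0H$ and $Gx_1H$ are disjoint. Whatever element $w$ a cheating prover commits to, it lies in at most one of these two double cosets, hence at most one challenge $d$ admits a valid factorization $w=ax_db$. The verifier therefore accepts with probability at most $1/2$, giving one-sided error $1/2$.

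The main work, and the step I expect to be the most delicate, is establishing perfect zero-knowledge for the honest verifier. The simulator chooses $d\in\{0,1\}$ uniformly, samples $a\in G$ and $b\in H$ uniformly via Proposition~\ref{prop:sim}, sets $w=ax_db$, and assembles the transcript from $d$, $w$, and $(a,b)$; it clearly runs in expected polynomial time. To prove that its output is distributed identically with $\view_{V,P}(w)$ in the yes-case, the key lemma is that for fixed $x$ the map $(g,h)\mapsto gxh$ from $G\times H$ onto $GxH$ has all fibers of the same cardinality $\absvalue{G\cap xHx^{-1}}$, an orbit--stabilizer count for the action of $G\times H$. Consequently $gxh$ with $(g,h)$ uniform is uniform on $GxH$, and, conditioned on its value, the pair $(g,h)$ is uniform over the corresponding fiber. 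Since $Gx_0H=Gx_1H$ when $s\in GH$, the simulator's $w$ is uniform on this common double coset for either value of $d$, matching the real commitment, which is uniform on $Gx_1H$ independently of $d$; and in both the real interaction and the simulation the revealed pair is, conditioned on $(d,w)$, uniform over the fiber $\setdef{(a,b)}{ax_db=w}$ determined by the matching map. Hence the two distributions coincide exactly, which yields honest-verifier perfect zero-knowledge and completes the theorem.
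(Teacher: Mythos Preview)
Your protocol is essentially the paper's (with the challenge bits swapped), and your completeness and soundness arguments are fine; the double-coset framing via $x_0=\mathrm{id}$, $x_1=s$ is a clean way to see why soundness holds.

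There is, however, a real gap in the zero-knowledge step. For $d=0$ you have the prover ``use its unlimited power to find a factorization $w=ab$'', and then in the analysis you assert that in the real interaction the revealed pair is, conditioned on $(d,w)$, uniform over the fiber $\setdef{(a,b)}{ax_db=w}$. But nothing in your description of $P$ makes that true: if $P$ picks the factorization deterministically from $w$ (or by any rule decoupled from the randomness $g,h$ already spent), the conditional law of $(a,b)$ given $(d,w)=(0,w)$ is a point mass, not uniform over the fiber, and the simulator does \emph{not} match. Your fiber lemma is correct, but it controls the distribution of $(g,h)$ given $w=gsh$; it says nothing about whatever new factorization $P$ subsequently outputs.

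The paper closes exactly this gap by specifying the prover's move: $P$ fixes a decomposition $s=g_0h_0$ once, and on the challenge asking for a $GH$-factorization of $t=gsh$ returns $(gg_0,\,h_0h)$. Then the returned pair is uniform on $G\times H$ outright (since $g\mapsto gg_0$ and $h\mapsto h_0h$ are bijections), and the honest-verifier analysis becomes a one-liner: in either branch the revealed pair is uniform on $G\times H$ and the commitment is the corresponding deterministic function of it, exactly as in the simulator. Your fiber-based route can be salvaged too, but you must either (i) have $P$ sample $(a,b)$ uniformly from $\setdef{(a,b)\in G\times H}{ab=w}$, or (ii) adopt the $(gg_0,h_0h)$ rule and note that $(g,h)\mapsto(gg_0,h_0h)$ is a bijection from the fiber $\setdef{(g,h)}{gsh=w}$ onto the fiber $\setdef{(a,b)}{ab=w}$, so uniformity transfers. (A minor point: you use $w$ both for the input and for the prover's commitment; pick distinct symbols.)
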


\begin{proof}
On input $(s,G,H)$ such that $s\in GH$ the IPS $\langle V,P\rangle$
proceeds as follows.

\smallskip

{\it 1st round.}

\noindent
$P$ generates random elements $g\in G$ and $h\in H$, computes $t=gsh$,
and sends $t$ to $V$. $V$ checks if $t$ is a permutation of the given
order and if not (this is possible in the case of a cheating prover)
halts and outputs 1.

\medskip

{\it 2nd round.}

\noindent
$V$ chooses a random bit $b\in\{0,1\}$ and sends it to $P$.

\smallskip

{\it 3rd round.}

\noindent
{\it Case $b=0$.}
$P$ sends $V$ permutations $g$ and $h$. $V$ checks if
$g\in G$, $h\in H$, and $t=gsh$.

\noindent
{\it Case $b\ne 0$} (this includes the possibility of a message
$b\notin\{0,1\}$ produced by a cheating verifier).
$P$ decomposes $s$ into the product $s=g_0h_0$ with $g_0\in G$
and $h_0\in H$, computes $g_1=gg_0$ and $h_1=h_0h$, and sends $g_1$
and $h_1$ to $V$. $V$ checks if $g_1\in G$, $h_1\in H$, and $t=g_1h_1$.

$V$ halts and outputs 1 if the conditions are checked successfully
and 0 otherwise.

\medskip

This IPS is obviously public-coin. We need to check that this is indeed
an IPS for \dcm\/ with one-sided error 1/2 and, moreover, that this
is a honest-verifier perfect zero-knowledge IPS.

\completeness
If $s\in GH$, then it is clear that $V$ outputs 1 with probability 1.

\soundness
Assume that $s\notin GH$ and consider an arbitrary cheating prover $P^*$.
Observe that if both $t=gsh$, $g\in G$, $h\in H$ and $t=g_1h_1$,
$g_1\in G$, $h_1\in H$, then $s\in GH$. It follows that,
for at least one value of $b$, $V$ outputs 0 and therefore $V$
outputs 1 with probability at most 1/2.

\zk
Assume that $s\in GH$. During interaction with $P$, $V$ sees
$\view_{V,P}(s,G,H)=(b,t,b,g',h')$ where $g'$ and $h'$ are
received by $V$ in the 3rd round. If $b=0$, then $t=gsh$, $g'=g$,
and $h'=h$. If $b=1$, then $t=g'h'$, $g'=gg_0$, and $h'=h_0h$.
In both the cases $g'$ and $h'$ are random elements of $G$ and $H$
respectively. The random variable $\view_{V,P}(s,G,H)$ can be therefore
generated by the following simulator: Generate a random bit $b$
and random elements $g'\in G$ and $h'\in H$; If $b=0$, set $t=g'sh'$;
If $b=1$, set $t=g'h'$.
\end{proof}

\begin{corollary}\label{cor:1}
The \dcm\/ problem has an honest-verifier perfect zero-knowledge
three-round public-coin IPS with one-sided error $2^{-n}$.
\end{corollary}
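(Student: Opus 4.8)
The plan is to obtain Corollary \ref{cor:1} by applying the parallel-composition machinery of Proposition \ref{prop:parrep} to the atomic system constructed in Theorem \ref{thm:1}. The atomic IPS is an honest-verifier perfect zero-knowledge public-coin system that runs in a constant number of rounds (three) with one-sided constant error $1/2$, so it satisfies every hypothesis of Proposition \ref{prop:parrep} with $c=3$ and $\epsilon=1/2$. I would therefore form its $k(n)$-fold parallel composition and read off the conclusion: the composed system is again honest-verifier perfect zero-knowledge, remains a three-round public-coin IPS, and has one-sided error $(1/2)^{k(n)}=2^{-k(n)}$.

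The only real choice is the composition parameter $k(n)$. To drive the error down to $2^{-n}$ as claimed, I would simply set $k(n)=n$, where $n=|w|$ is the length of the input $(s,G,H)$. With this choice Proposition \ref{prop:parrep} yields error $2^{-n}$ directly, while preserving the three-round count and the public-coin and honest-verifier perfect zero-knowledge properties. Since each parallel copy uses independent random strings $r_V$ and $r_P$ and the verifier outputs $1$ only if all $n$ copies accept, the soundness argument of Theorem \ref{thm:1} applies independently to each copy, giving the product bound on the cheating probability.

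I expect no genuine obstacle here, since the corollary is an essentially immediate instantiation of the general parallel-composition result already stated. The one point that deserves a word of care is that Proposition \ref{prop:parrep} asserts preservation of perfect zero-knowledge only for the \emph{honest} verifier, which is exactly the flavour of zero-knowledge claimed in the corollary; thus there is no gap between what is available and what is asserted. I would also note in passing that the error $2^{-n}$ is exponentially small and hence negligible in the sense defined earlier, so the composed system is in particular a bona fide IPS for \dcm; but for the corollary itself it suffices to invoke Proposition \ref{prop:parrep} with $k(n)=n$ and record the resulting parameters.
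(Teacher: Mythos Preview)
Your proposal is correct and matches the paper's own proof essentially verbatim: apply Proposition~\ref{prop:parrep} with $k(n)=n$ to the atomic system of Theorem~\ref{thm:1}, noting that parallel composition preserves the three-round count, the public-coin property, the one-sidedness of the error, and honest-verifier perfect zero-knowledge.
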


\begin{proof}
By Proposition \ref{prop:parrep} the $n$-fold parallel composition
of the IPS from Theorem \ref{thm:1} reduces the error to $2^{-n}$
and preserves the properties of the atomic system.
\end{proof}

Let {\sc Double Coset Non-Membership}, abbreviated as \dcnm,
be the problem opposite to \dcm, that is, given a permutation $s$
and two permutation groups $G$ and $H$, to recognize if $s\notin GH$.
The \dcnm\/ problem is clearly polynomial-time equivalent with
recognition of the set-theoretic complement of \dcm, where
the latter is encoded as a language in the binary alphabet.

\begin{corollary}\label{cor:2}
\dcnm\/ has an honest-verifier statistical zero-knowledge constant-round
IPS.
\end{corollary}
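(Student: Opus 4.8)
The plan is to derive the statement directly from Corollary \ref{cor:1} together with the second item of Okamoto's theorem (Proposition \ref{prop:oka}). Write $L=\setdef{(s,G,H)}{s\in GH}$ for the \dcm\/ language. Item 2 of Proposition \ref{prop:oka} asserts that whenever $L$ possesses an honest-verifier statistical zero-knowledge public-coin IPS, the complement $\bar L$ possesses an honest-verifier statistical zero-knowledge constant-round IPS. So the first task is to supply such a public-coin system for $L$ itself, and the second is to transfer the resulting system for $\bar L$ to \dcnm.

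For the first task I would simply invoke Corollary \ref{cor:1}, which provides a three-round public-coin IPS for \dcm\/ that is honest-verifier perfect zero-knowledge and has one-sided error $2^{-n}$. Two observations make this an admissible input to Okamoto's theorem. First, perfect zero-knowledge is a special case of statistical zero-knowledge, since the statistical distance between the simulator's output and the verifier's view is then identically $0$, which is in particular negligible. Second, the error $2^{-n}$ is exponentially small and hence negligible, so the system qualifies as an IPS for $L$ in the sense of the definitions. Thus $L$ has an honest-verifier statistical zero-knowledge public-coin IPS, and Proposition \ref{prop:oka} yields an honest-verifier statistical zero-knowledge constant-round IPS for $\bar L$.

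It remains to pass from $\bar L$ to \dcnm, and this is the step I would treat most carefully. The point needing care is that $\bar L=\{0,1\}^*\setminus L$ contains, besides the well-formed encodings $(s,G,H)$ with $s\notin GH$, all binary strings that do not encode a legal triple, whereas \dcnm\/ concerns only valid instances. This gap is bridged by the polynomial-time equivalence between \dcnm\/ and the recognition of the set-theoretic complement of \dcm\/ noted just before the corollary: honest-verifier statistical zero-knowledge with a bounded number of rounds is preserved under polynomial-time many-one reductions, because on a given input one may run the reduction and then execute the IPS for $\bar L$ on the reduced instance, composing the simulator with the reduction. The reduction leaves the number of rounds unchanged, so \dcnm\/ inherits a constant-round honest-verifier statistical zero-knowledge IPS. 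Beyond this bookkeeping the argument presents no genuine obstacle, the substantive content being entirely absorbed into Okamoto's theorem.
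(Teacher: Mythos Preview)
Your argument is correct and matches the paper's primary proof, which is simply the one-line derivation ``follows from Corollary~\ref{cor:1} by Proposition~\ref{prop:oka}''; your additional care about perfect versus statistical zero-knowledge and about $\bar L$ versus \dcnm\/ merely spells out what the paper leaves implicit (the latter point being handled by the polynomial-time equivalence noted just before the corollary). The paper also supplies an alternative, self-contained proof by exhibiting an explicit two-round honest-verifier perfect zero-knowledge protocol for \dcnm\/ (the verifier sends $t=gsh$ or $t=gh$ according to a random bit, and the prover guesses the bit by testing whether $t\in GH$), but this is offered as a supplement rather than the main route.
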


\begin{proof}
The corollary follows from Corollary \ref{cor:1} by Proposition
\ref{prop:oka}.

We also give an alternative direct proof of this claim
describing an honest-verifier perfect zero-knowledge two-round IPS
$\langle V,P\rangle$
for \dcnm\/ with one-sided error 1/2. This system, for the case of
permutation groups, generalizes the IPS suggested in \cite{Bab3}
for the problem of testing the membership in a finite group given
by a list of generators and an oracle access to the group operation.

On input $(s,G,H)$ such that $s\notin GH$ the system works as follows.

\medskip

{\it 1st round.}

\noindent
$V$ chooses a random bit $b$ to be the first bit of a random string $r_V$
and, based on the subsequent bits of $r_V$, generates random elements
$g\in G$ and $h\in H$.
If $b=0$, $V$ computes $t=gh$; If $b=1$, $V$ computes $t=gsh$.
Then $V$ sends $t$ to $P$.

{\it 2nd round.}

\noindent
$P$ recognizes if $t\in GH$. If so, $P$ sets $a=0$; If not, $P$ sets $a=1$.
Then $P$ sends $a$ to $V$.

$V$ checks if $a=b$ and halts. If the equality is true, $V$ outputs 1;
Otherwise $V$ outputs 0.

\medskip

\completeness
Assume that $s\notin GH$. In the first round, $t\in GH$ if $b=0$
and $t\notin GH$ if $b=1$. Therefore $V$ outputs 1 with probability 1.

\soundness
Assume that $s\in GH$. Then $t\in GH$ regardless of the value of $b$.
Moreover, $t$ is the product of random elements of $G$ and $H$
and, as a random variable, is independent of the random variable $b$.
It follows that in the second round a message from the cheating prover
$P^*$ to $V$, which is a function of $s$, $G$, $H$, $r_{P^*}$, and $t$,
is equal to $b$ with probability at most 1/2. Hence
$\langle V,P^*\rangle(s,G,H)=1$ with probability at most 1/2.

\zk
Assume that $s\notin GH$. During interaction with $P$,
$V$ sees
$\view_{V,P}(s,G,H)=(r'_V,t,a)$,
where $a$ equals the first bit $b$
of $r'_V$. The simulator therefore just generates a random string $r_V$,
extracts the first bit $b$ from it, sets $a=b$, based on the remaining bits
of $r_V$ computes $g$ and $h$, based on $b$, $g$, and $h$ computes $t$,
and sets $r'_V$ to be the prefix of $r_V$
that was actually used for these purposes.
\end{proof}

\begin{corollary}\label{cor:3}
\dcm\/ is in SZK. Moreover, \dcm\/ has a statistical zero-knowledge
public-coin IPS with one-sided error.
\end{corollary}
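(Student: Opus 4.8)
The plan is to obtain the statement as a direct consequence of the Goldreich--Sahai--Vadhan transformation (Proposition~\ref{prop:gsv}) applied to the proof system already constructed in Corollary~\ref{cor:1}. The point is that Corollary~\ref{cor:1} hands us a proof system with almost all the required features \emph{except} for zero-knowledge against a \emph{cheating} verifier, and Proposition~\ref{prop:gsv} is precisely the tool that upgrades honest-verifier zero-knowledge to the general notion while preserving the public-coin and one-sidedness properties. So the whole argument is a matter of checking that the hypotheses of Proposition~\ref{prop:gsv} are met and that the relevant parameters survive the transformation.

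First I would recall that Corollary~\ref{cor:1} supplies an honest-verifier \emph{perfect} zero-knowledge public-coin IPS for \dcm\ with one-sided error $2^{-n}$. Since perfect zero-knowledge is the special case of statistical zero-knowledge in which the statistical distance between simulator output and view is identically zero (hence trivially negligible), this system is in particular an honest-verifier \emph{statistical} zero-knowledge public-coin IPS for \dcm, and its error is one-sided. These are exactly the hypotheses of Proposition~\ref{prop:gsv}. Invoking it, I would obtain a \emph{general} statistical zero-knowledge public-coin IPS for \dcm; and because the input error was one-sided, the clause of Proposition~\ref{prop:gsv} about one-sided error guarantees that the output error is one-sided as well. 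As ``IPS for $L$'' by definition carries negligible error, the resulting system has negligible one-sided error. This single system witnesses both assertions at once: it places \dcm\ in SZK (general statistical zero-knowledge with negligible error), and it is the claimed statistical zero-knowledge public-coin IPS with one-sided error.

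I do not expect a genuine obstacle here, since the substantive work is carried by the cited deep result. The only points deserving care are the two verifications just mentioned: that the system of Corollary~\ref{cor:1} indeed qualifies as honest-verifier \emph{statistical} (not merely perfect-for-the-honest-verifier) zero-knowledge and public-coin, so that Proposition~\ref{prop:gsv} applies; and that one-sidedness together with negligibility of the error are retained through the transformation, which is exactly what the statement of Proposition~\ref{prop:gsv} provides. It is worth noting that the transformation yields \emph{statistical} rather than perfect zero-knowledge and, per the remark following Proposition~\ref{prop:gsv}, a round complexity that grows with the input length in order to drive the error down to negligible; neither affects the two claims of the corollary, which ask only for SZK membership and for a public-coin system with one-sided (negligible) error.
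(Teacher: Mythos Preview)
Your proposal is correct and matches the paper's own proof essentially verbatim: it too applies the Goldreich--Sahai--Vadhan transformation (Proposition~\ref{prop:gsv}) to the honest-verifier perfect zero-knowledge public-coin IPS of Corollary~\ref{cor:1}, noting that perfect implies statistical zero-knowledge and that one-sidedness is preserved. The paper additionally mentions, as an alternative route to SZK membership, applying Propositions~\ref{prop:oka} and~\ref{prop:gsv} to the two-round IPS for \dcnm\ from the direct proof of Corollary~\ref{cor:2}, but this is only a side remark.
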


\begin{proof}
Apply the transformation from Proposition \ref{prop:gsv} to the IPS
from Corollary~\ref{cor:1}.

Note that another proof of the membership of \dcm\/ in SZK
can be given by applying Propositions \ref{prop:oka} and
\ref{prop:gsv} to the IPS in the alternative proof of
Corollary~\ref{cor:2}.
\end{proof}

\begin{corollary}[Babai-Moran \cite{BMo}]\label{cor:4}
\dcm\/ is in coAM. Therefore \dcm\/ is not NP-complete unless
the polynomial-time hierarchy collapses at the second level.
\end{corollary}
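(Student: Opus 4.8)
The plan is to treat the two assertions of the corollary separately, building entirely on material already in hand. For the first assertion, $\dcm\in\mbox{coAM}$, I would work from the definition: by the definition of coAM, $\dcm$ lies in coAM if and only if its complement lies in AM, and that complement is, up to the polynomial-time encoding mentioned just before the statement of Corollary~\ref{cor:2}, precisely $\dcnm$. So the task becomes exhibiting a constant-round interactive proof system for $\dcnm$ with error $1/3$. I would take as the starting point the direct two-round, public-coin, honest-verifier perfect zero-knowledge IPS for $\dcnm$ of one-sided error $1/2$ already constructed in the second proof of Corollary~\ref{cor:2}. Stripping away the zero-knowledge property, which plays no role here, leaves an ordinary two-round IPS for $\dcnm$ with perfect completeness and soundness error $1/2$ against arbitrary cheating provers.

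First I would drive the soundness error below the threshold $1/3$ required by the definition of AM. Since the error is one-sided and constant, a bounded number of repetitions suffices: by Proposition~\ref{prop:seqrep}, the two-fold sequential composition already yields one-sided error $1/4<1/3$, and a bounded composition of a two-round protocol remains constant-round. Hence $\dcnm$ has a constant-round IPS with error $1/3$, so $\dcnm\in\mbox{AM}$ and therefore $\dcm\in\mbox{coAM}$. (A less elementary alternative route, faithful to the general theory, would invoke Corollary~\ref{cor:3} and Proposition~\ref{prop:aha}: from $\dcm\in\mbox{SZK}$ and $\mbox{SZK}\subseteq\mbox{coAM}$ one obtains $\dcm\in\mbox{coAM}$ at once.)

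For the second assertion I would argue by contradiction. Suppose $\dcm$ were NP-complete; then every language in NP is polynomial-time many-one reducible to $\dcm$. Since coAM is closed under polynomial-time many-one reductions (to decide the complement of a reduced language one computes the reduction in polynomial time and then runs the fixed AM protocol for the complement of $\dcm$, which keeps the round number constant and the error bounded) and $\dcm\in\mbox{coAM}$, this would force $\mbox{NP}\subseteq\mbox{coAM}$. By the Boppana--H\aa stad--Zachos theorem~\cite{BHZ} recalled in the introduction, $\mbox{NP}\subseteq\mbox{coAM}$ implies the collapse $\Sigma_2^P=\Pi_2^P$ of the polynomial-time hierarchy to its second level, which is exactly the stated conditional conclusion.

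I do not expect a genuine obstacle in either step, since the corollary is essentially an assembly of results already proved. The only point demanding care is the bookkeeping in the first step: I must verify that the underlying protocol of Corollary~\ref{cor:2} satisfies full soundness against \emph{every} cheating prover, not merely honest-verifier soundness, so that its placement in AM is legitimate, and that the error reduction genuinely preserves the constant round bound. Both are immediate from the soundness analysis given there and from Proposition~\ref{prop:seqrep}, so the argument goes through without surprises.
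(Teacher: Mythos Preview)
Your proposal is correct and follows the same two routes the paper itself indicates: the direct route via the constant-round IPS for \dcnm\ from Corollary~\ref{cor:2}, and the alternative via Corollary~\ref{cor:3} together with Proposition~\ref{prop:aha}. The paper's own proof is a one-line reference to these same facts; you have simply unpacked the bookkeeping (error reduction to below $1/3$ by bounded repetition, and the Boppana--H\aa stad--Zachos consequence for the second assertion) that the paper leaves implicit.
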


\begin{proof}
This is an immediate consequence of Corollary \ref{cor:2}
or a consequence of Corollary \ref{cor:3} based on
Proposition~\ref{prop:aha}.
\end{proof}

\begin{corollary}\label{cor:5}
Under the assumption on the hardness of Discrete Logarithm,
\dcm\/ has a constant-round statistical zero-knowledge IPS
with exponentially small error.
\end{corollary}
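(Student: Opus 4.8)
The plan is to obtain this corollary by a single application of the Bellare--Micali--Ostrovsky transformation (Proposition~\ref{prop:bmo}) to the system already constructed in Corollary~\ref{cor:1}. First I would recall that by Corollary~\ref{cor:1} the \dcm\/ problem possesses an honest-verifier perfect zero-knowledge three-round public-coin IPS with one-sided error $2^{-n}$. Since perfect zero-knowledge is a special case of statistical zero-knowledge, this system is in particular an honest-verifier statistical zero-knowledge IPS, which is precisely the type of object that Proposition~\ref{prop:bmo} takes as input.

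Next I would check that the hypotheses of Proposition~\ref{prop:bmo} are satisfied. The system of Corollary~\ref{cor:1} works in a constant number of rounds on every input, so one may take $c(|w|)=3$. Its error $2^{-n}$ is at most $1/3$ on all inputs beyond a trivial length, which suffices because the behaviour on the finitely many short inputs can be handled by hard-wiring the correct answer into the verifier. It is worth emphasising that one must invoke Corollary~\ref{cor:1} here rather than Theorem~\ref{thm:1} directly: the error $1/2$ of the atomic system exceeds the threshold $1/3$ required by Proposition~\ref{prop:bmo}, whereas the parallel composition underlying Corollary~\ref{cor:1} brings the error comfortably below it while keeping the round count constant.

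Applying Proposition~\ref{prop:bmo} then yields, under the assumption on the hardness of Discrete Logarithm, a \emph{general} (not merely honest-verifier) statistical zero-knowledge IPS for \dcm\/ that on input $w$ works in $O(c(|w|))=O(1)$ rounds, i.e.\ in a constant number of rounds, and with exponentially small error. This is exactly the assertion of the corollary, and in particular the output of the transformation already meets the negligible-error requirement, so no further error reduction is needed. I do not expect a genuine obstacle in this argument: it is essentially one invocation of a previously stated black-box transformation. The only points demanding care are the bookkeeping that the source error lies below the $1/3$ threshold and the observation that a constant input round-count $c$ produces a constant output round-count under the $O(\cdot)$ bound of Proposition~\ref{prop:bmo}.
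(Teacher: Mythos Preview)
Your proof is correct and follows essentially the same approach as the paper: apply the Bellare--Micali--Ostrovsky transformation (Proposition~\ref{prop:bmo}) to the honest-verifier zero-knowledge constant-round IPS already constructed for \dcm. The paper's one-line proof invokes Theorem~\ref{thm:1} directly rather than Corollary~\ref{cor:1}; your version is actually a bit more careful, since you explicitly meet the $1/3$ error threshold in the hypothesis of Proposition~\ref{prop:bmo}, whereas the atomic system of Theorem~\ref{thm:1} has error $1/2$ and, read literally, would need a preliminary constant-fold repetition before the proposition applies.
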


\begin{proof}
The corollary follows from Theorem \ref{thm:1} by Proposition~\ref{prop:bmo}.
\end{proof}

\begin{theorem}\label{thm:2}
The $n$-fold sequential composition of the IPS in Theorem \ref{thm:1}
is a perfect zero-knowledge public-coin IPS for \dcm\/ with
exponentially small error. Hence \dcm\/ is in PZK.
\end{theorem}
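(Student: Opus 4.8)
The plan is to verify three things about the $n$-fold sequential composition $\langle V',P'\rangle$ of the atomic system $\langle V,P\rangle$ from Theorem \ref{thm:1}: that it is a public-coin IPS for \dcm\ with exponentially small error, that it is perfect zero-knowledge against every (possibly cheating) polynomial-time verifier, and hence that \dcm\ lies in PZK. The error bound is immediate from Proposition \ref{prop:seqrep}: since the atomic system has one-sided error $1/2$, the $n$-fold composition is an IPS for \dcm\ with one-sided error $2^{-n}$, which is exponentially small; the public-coin property and the one-sidedness of the error are visibly preserved by sequential composition. The only substantive point, and the one I expect to be the main obstacle, is to strengthen the honest-verifier guarantee of Theorem \ref{thm:1} into a simulator that defeats an arbitrary cheating verifier $V^*$, since membership in PZK demands exactly this.

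To build such a simulator I would exploit the special three-round structure of the atomic protocol: the prover's first message $t$ is a commitment, the verifier's reply is a single challenge bit $b$, and for $s\in GH$ the honest prover can answer either challenge. The decisive observation is that the distribution of $t$ does not depend on the value of $b$ for which one prepares. Indeed, writing $s=g_0h_0$ with $g_0\in G$, $h_0\in H$, for uniform independent $g\in G$, $h\in H$ the element $gsh=(gg_0)(h_0h)$ is a product of a uniform element of $G$ and a uniform element of $H$; hence $t=gsh$ and $t=g'h'$ (with $g',h'$ uniform independent) are identically distributed. This is precisely the fact used by the honest-verifier simulator of Theorem \ref{thm:1}, and it is where the hypothesis $s\in GH$ enters.

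The simulator then processes the $n$ atomic executions in order, against a single random tape $r_{V^*}$ sampled once at the start. For the current execution, having fixed the accepted transcript of the earlier ones, it repeats the following attempt: guess a bit $b'\in\{0,1\}$ uniformly; if $b'=0$ sample uniform $g',h'$ and set $t=g'sh'$, if $b'=1$ sample uniform $g',h'$ and set $t=g'h'$, recording $(g',h')$ as the prepared answer; feed $t$ to $V^*$ and read off its reply $b$ (treating any $b\ne 0$ as $1$); accept the triple $(t,b,(g',h'))$ if $b=b'$ and otherwise discard it and retry with fresh randomness. Because the marginal distribution of $t$ is the same for both guesses, $t$ is independent of $b'$; since $V^*$ is deterministic once $r_{V^*}$ and the history are fixed, its reply is a function of $t$ alone, so $\PP{b=b'}=1/2$ at every attempt regardless of $V^*$. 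Thus the expected number of attempts per execution is $2$, each attempt runs in polynomial time (random group elements are produced by Proposition \ref{prop:sim} and $V^*$ is polynomial-time), and the simulator runs in expected polynomial time overall.

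It remains to check that the simulator is perfect. I would argue by induction on the executions that, conditioned on $r_{V^*}$ and the already-fixed history, the accepted triple has exactly the distribution of the corresponding part of the real view. Conditioning on acceptance makes the attempt an instance of rejection sampling: among surviving attempts every triple $(t,0,(g',h'))$ with $t=g'sh'$ and every triple $(t,1,(g',h'))$ with $t=g'h'$, whose challenge matches $V^*$'s reply, receives weight $1/(|G|\,|H|)$. On the other side, the honest prover samples $(g,h)$ uniformly, sets $t=gsh$, and answers $(g,h)$ when $b=0$ and $(gg_0,h_0h)$ when $b=1$; the map $(g,h)\mapsto(gg_0,h_0h)$ is a bijection of $G\times H$ under which $gsh=(gg_0)(h_0h)$, and this identifies the two families of weights. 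Since $V^*$ is deterministic once $r_{V^*}$ and its received messages are fixed, the scanned prefix $r'_{V^*}$ reported for the accepted run is the correct one, so the full pair $(r'_{V^*},\mbox{messages})$ is distributed identically with $\view_{V^*,P'}(s,G,H)$. This yields perfect zero-knowledge on $L$, and together with the error bound it places \dcm\ in PZK.
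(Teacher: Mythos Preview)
Your proof is correct and follows essentially the same approach as the paper: the same rewinding simulator that guesses the challenge bit, prepares $t$ accordingly, and retries on a mismatch; the same independence argument for the $1/2$ success probability; and the same inductive comparison of distributions, driven by the bijection $(g,h)\mapsto(gg_0,h_0h)$ between representations $t=gsh$ and $t=gh$. The paper packages the counting behind this bijection into a separate nine-part lemma, whereas you invoke the bijection directly to match weights $1/(|G|\,|H|)$, but the content is the same.
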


\begin{proof}
Denote the composed IPS by $\langle V,P\rangle$.
As the atomic system is public-coin, so is $\langle V,P\rangle$.
By Proposition \ref{prop:seqrep} $\langle V,P\rangle$ is an IPS
for \dcm\/ with one-sided error $2^{-n}$. We have to prove that
$\langle V,P\rangle$ is perfect zero-knowledge.

For each verifier $V^*$ interacting with $P$ we describe a probabilistic
expected polynomial-time simulator $M_{V^*}$. The $M_{V^*}$ uses
the program of $V^*$ as a subroutine. Assume that the running time
of $V^*$ is bounded by a polynomial $q(n)$ in the input size.
On input $w$, $M_{V^*}$ will run the program of $V^*$ on input $w$
with random string $r$, where $r$ is the prefix of $M_{V^*}$'s random
string of length $q(|w|)$. In all other cases $M_{V^*}$ will use
the remaining part of its random string.

Work of $M_{V^*}$ on input $w=(s,G,H)$ consists of $|w|$ stages, where
a stage corresponds to an iteration of the atomic system.

\smallskip

{\it Stage $i$.}

\smallskip

\noindent
$M_{V^*}$ chooses random elements $g_i\in G$ and $h_i\in H$ and a
random bit $a\in\{0,1\}$. If $a=0$, $M_{V^*}$ computes $t_i=g_ish_i$;
If $a=1$, it computes $t_i=g_ih_i$. Then $M_{V^*}$ computes
$b_i=V^*(w,r,t_1,g_1,h_1,\ldots,t_{i-1},g_{i-1},h_{i-1},t_i)$,
the message that $V^*(w,r)$ sends $P$ in the $i$-th sequential
iteration of the atomic system after receiving $P$'s message $t_i$
and under the condition that in the preceding iterations
$P$'s messages were $t_1,g_1,h_1,\ldots,t_{i-1},g_{i-1},h_{i-1}$.
If $b_i$ and $a$ are simultaneously equal to or different from 0,
then $M_{V^*}$ puts $v_i=(t_i,b_i,g_i,h_i)$ and proceeds to the $(i+1)$-th
stage. If exactly one of $b_i$ and $a$ is equal to 0, then $M_{V^*}$
restarts the same $i$-th stage with new independent choice of $a$, $g_i$,
$h_i$.

\smallskip

After all stages are completed, $M_{V^*}$ halts and outputs
$(r',v_1,\ldots,v_{|w|})$, where $r'$ is the prefix of $r$ actually used
by $V^*$ during interaction on input $w$ with the prover sending
the messages $t_1,g_1,h_1,\ldots,t_{|w|},g_{|w|},h_{|w|}$.
Notice that it might happen that in unsuccessful attempts to pass
some stage $V^*$ used a prefix of $r$ longer than $r'$.

\medskip

We first check that $M_{V^*}$ terminates in expected polynomial time
whenever $s\in GH$. Since $V^*$ is polynomial-time, one attempt to
pass Stage $i$, $i\le|w|$, takes time bounded by a polynomial in $|w|$.
Recall that $M_{V^*}$ is programmed so that $a$ and $r$ are independent.
Furthermore, $a$ and $t_i$ are independent. Indeed, if $a=1$, then
$t_i=g_ih_i$ is the product of random elements of $G$ and $H$.
If $a=0$, then $t_i=(g_ig_0)(h_0h_i)$ is such a product as well.
Here $g_0\in G$ and $h_0\in H$ are elements of an arbitrary decomposition
$s=g_0h_0$. It follows that $a$ and $b_i$ are independent and therefore
an execution of the stage is successful with probability 1/2.
We conclude that on average each stage consists of 2 executions.
Thus, on average $M_{V^*}$ makes $2|w|$ polynomial-time executions
and this takes expected polynomial time.

We finally need to check that, whenever $s\in GH$, the output
$M_{V^*}(w)$ is distributed identically with $\view_{V^*,P}(w)$.
Notice that both the random variables depend on $V^*$'s random string $r$.
It therefore suffices to show that the distributions are identical
when conditioned on an arbitrary fixed $r$. For $0\le i\le|w|$,
let $D^i_M(w,r)$ denote the probability distribution of
$(r',v_1,\ldots,v_i)$ conditioned on $r$, and $D^i_{V^*,P}(w,r)$
denote the distribution of the part of $\view_{V^*,P}(w)$
formed up to the $i$-th sequential iteration. With this notation,
we have to prove that $D^{|w|}_M(w,r)=D^{|w|}_{V^*,P}(w,r)$.
Using the induction on $i$, we prove that
$D^{i}_M(w,r)=D^{i}_{V^*,P}(w,r)$ for every $0\le i\le|w|$.

The base case of $i=0$ is trivial. Let $i\ge 1$ and assume that
\begin{equation}\label{eq:1}
\PP{D^{i-1}_M(w,r)=u_{i-1}}=\PP{D^{i-1}_{V^*,P}(w,r)=u_{i-1}}
\end{equation}
for every value $u_{i-1}$.
Given $u_{i-1}$, assume now that both
$D^{i-1}_M(w,r)=u_{i-1}$ and $D^{i-1}_{V^*,P}(w,r)=u_{i-1}$,
and under these conditions consider how the $i$-th components
$v_i=(t_i,b_i,g_i,h_i)$ are distributed in $u_i=u_{i-1}v_i$
according to $D^{i}_M(w,r)$ and $D^{i}_{V^*,P}(w,r)$.
We will show that
\begin{equation}\label{eq:2}
\parbox{.9\textwidth}{
\begin{eqnarray*}
\condP{D^{i}_M(w,r)=u_{i-1}v_i}{D^{i-1}_M(w,r)=u_{i-1}}\hspace{53mm}&&\\
=\condP{D^{i}_{V^*,P}(w,r)=u_{i-1}v_i}{D^{i-1}_{V^*,P}(w,r)=u_{i-1}}&&
\end{eqnarray*}
}
\end{equation}
for every value $v_i$.
Together with \refeq{eq:1} this will imply the identity of
$D^{i}_M(w,r)$ and $D^{i}_{V^*,P}(w,r)$.

To prove \refeq{eq:2}, we will show that according to the both
conditional distributions $v_i$ is uniformly distributed on the set
\begin{eqnarray*}
S=\Bigl\{ \hspace{0.5mm} (t,b,g,h) & : &
t\in GH,\ b=V^*(w,r,u_{i-1},t),\ g\in G,\ h\in H,\\
&& t=gsh\mbox{ if }b=0\mbox{ and }t=gh\mbox{ if }b\ne 0\Bigr. \Bigr\}.
\end{eqnarray*}

Given $t$ and $s$, define sets
$R(t)=\setdef{(g,h)}{g\in G,\, h\in H,\, gh=t}$
and $R_s(t)=\setdef{(g,h)}{g\in G,\, h\in H,\, gsh=t}$.
The first claim of the following lemma appeared in~\cite{Hof1}.

\begin{lemma}\label{lem:}
Let $k=|G\cap H|$. Assume that $s=g_0h_0$ with $g_0\in G$ and $h_0\in H$.
Then the following statements are true.

\begin{enumerate}
\item
Every $t\in GH$ has $k$ representations $t=gh$ with $g\in G$ and $h\in H$,
i.e., $|R(t)|=k$. If $t=g_1h_1$, then all other representations are
\begin{equation}\label{eq:f}
t=(g_1f)(f^{-1}h_1),
\end{equation}
where $f$ ranges over group $G\cap H$.
\item
For every $t$,
the mapping $\alpha(g,h)=(gg_0,h_0h)$ is one-to-one from $R_s(t)$ to $R(t)$.
\item
Every $t\in GH$ has $k$ representations $t=gsh$ with $g\in G$ and $h\in H$,
i.e., $|R_s(t)|=k$.
\item
If $\phi\function{G\times H}{GH}$ is defined by $\phi(g,h)=gh$,
then $|\phi^{-1}(t)|=k$ for every $t\in GH$.
\item
If $\psi\function{G\times H}{GH}$ is defined by $\psi(g,h)=gsh$,
then $|\psi^{-1}(t)|=k$ for every $t\in GH$.
\item
If $t=gh$ is the product of uniformly distributed
random elements $g\in G$ and $h\in H$,
then $t$ is uniformly distributed on $GH$.
\item
If a uniformly distributed random pair $(g,h)\in G\times H$ is
conditioned on $gh=t$ for an arbitrary fixed $t\in GH$, then
$(g,h)$ is uniformly distributed on $R(t)$.
\item
If $t=gsh$ and $g\in G$ and $h\in H$ are uniformly distributed
random elements, then $t$ is uniformly distributed on $GH$.
\item
If a uniformly distributed random pair $(g,h)\in G\times H$ is
conditioned on $gsh=t$ for an arbitrary fixed $t\in GH$, then
$(g,h)$ is uniformly distributed on $R_s(t)$.
\end{enumerate}
\end{lemma}

\begin{proof}
We first prove Item 1. Let $e$ denote the identity permutation.
Clearly that we have at least $k$ representations
of the form \refeq{eq:f}. On the other hand, every representation
$t=gh$ is of this form. Indeed, we have $(g^{-1}g_1)(h_1h^{-1})=e$
and hence both $g^{-1}g_1$ and $h_1h^{-1}$ are simultaneously in $G$
and in $H$.

To prove Item 2, observe that $\alpha$ is indeed from $R_s(t)$ to $R(t)$.
The map $\alpha'(g,h)=(gg_0^{-1},h_0^{-1}h)$ is easily seen to be
from $R(t)$ to $R_s(t)$ and inverse to $\alpha$.

Items 1 and 2 imply Item 3, Item 3 implies Item 5, and Item 5 implies
Item 8. Item 1 implies Item 4, and Item 4 implies Item 6.
Items 7 and 9 are true by the definition of $R(t)$ and $R_s(t)$.
\end{proof}

The distribution $D^i_{V^*,P}(w,r)$ conditioned on
$D^{i-1}_{V^*,P}(w,r)=u_{i-1}$ is samplable as follows.
Choose random elements $g\in G$ and $h\in H$. Compute
$t_i=gsh$ and $b_i=V^*(w,r,u_{i-1},t_i)$. If $b_i=0$, set
$g_i=g$ and $h_i=h$, otherwise set $g_i=gg_0$ and $h_i=h_0h$.
Clearly, this distribution of $(t_i,b_i,g_i,h_i)$ is over $S$.

By Item 8 of Lemma \ref{lem:}, $t_i$ is uniformly distributed on
$GH$. If $b_i=0$, then by Item 9 of Lemma \ref{lem:}, for every fixed
$t_i$, the pair $(g_i,h_i)$ is uniformly distributed on $R_s(t)$.
If $b_i\ne 0$, then by Item 2 of Lemma \ref{lem:}, for every fixed
$t_i$, the pair $(g_i,h_i)$ is uniformly distributed on $R(t)$.
It follows that $D^i_{V^*,P}(w,r)$ conditioned on
$D^{i-1}_{V^*,P}(w,r)=u_{i-1}$ is uniform on $S$.

Consider now the sampling procedure for
the distribution $D^i_{M}(w,r)$ conditioned on
$D^{i-1}_{M}(w,r)=u_{i-1}$ as in the description of the simulator
$M_{V^*}$. Under the condition that $a=0$, by Items 8 and 9 of
Lemma \ref{lem:}, $t_i$ is distributed uniformly over $GH$
and for every fixed value of $t_i$, the pair $(g_i,h_i)$ is uniformly
distributed over $R_s(t)$.
Under the condition that $a=1$, by Items 6 and 7 of
Lemma \ref{lem:}, $t_i$ is distributed uniformly over $GH$
and for every fixed value of $t_i$, the pair $(g_i,h_i)$ is uniformly
distributed over $R(t)$.
This leads to an equivalent sampling procedure:
Choose a random $t_i\in GH$, compute $b_i=V^*(w,r,u_{i-1},t_i)$;
If $b_i=0$, choose a random pair $(g_i,h_i)$ in $R_s(t_i)$,
otherwise in $R(t)$.
It follows that $D^i_{M}(w,r)$ conditioned on
$D^{i-1}_{M}(w,r)=u_{i-1}$ is uniform on $S$.
\end{proof}

\begin{remark}
The simulator in the proof of Theorem \ref{thm:2} is {\em black-box},
that is, for each $V^*$ it follows the same program that uses
the strategy of $V^*$ as a subroutine. It should be noted that
by \cite{GKr} the parallel composition of the IPS in Theorem \ref{thm:1}
is {\em not\/} zero-knowledge with black-box simulator unless
\dcm\/ is decidable in probabilistic polynomial time.
\end{remark}

\section{Future work}

A natural question arises if our results can be extended to {\em matrix
groups\/} over finite fields. One of the reasons why this case
is more complicated is that, unlike permutation groups, no efficient test
of membership for matrix groups is known. We intend to tackle this question
in a subsequent paper.

\end{document}